\newtheorem{theorem}{Theorem}
\newtheorem{lemma}[theorem]{Lemma}
\theoremstyle{definition}
\newtheorem*{remark}{Remark}
\newtheorem*{definition}{Definition}
\renewcommand{\d}{\mathrm{d}}
\newcommand{\cL}{\mathcal{L}}
\newcommand{\var}[3]{\frac{\delta_{#1} #2}{\delta #3}}
\newcommand{\der}[2]{\frac{\partial #1}{\partial #2}}
\begin{document}

\title{Lagrangian multiforms and dispersionless integrable systems}
\author{Evgeny V. Ferapontov\footnote{ORCID 0000-0001-9025-9478, \texttt{E.V.Ferapontov@lboro.ac.uk}}\ \ and Mats Vermeeren\footnote{ORCID 0000-0002-6982-4505, \texttt{M.Vermeeren@lboro.ac.uk} (corresponding author)} }
   \date{}
   \maketitle
   \vspace{-7mm}
\begin{center}
Department of Mathematical Sciences \\ Loughborough University \\
Loughborough, Leicestershire LE11 3TU \\ United Kingdom
\end{center}

\bigskip

\begin{abstract}
\noindent
We demonstrate that interesting examples of Lagrangian multiforms appear naturally in the theory of multidimensional dispersionless  integrable systems as (a)  higher-order conservation laws of linearly degenerate PDEs in 3D, and (b) in the context of Gibbons-Tsarev equations governing hydrodynamic reductions of heavenly type equations in 4D.

\bigskip
\noindent
\textbf{MSC2020:}  35Q51, 35Q75, 37K10, 37K25,    53C25,   53Z05.       

\bigskip
\noindent
\textbf{Keywords:} Lagrangian multiforms,  higher conservation laws, dispersionless integrability, hydrodynamic reductions, Gibbons-Tsarev equations.
\end{abstract}

\vfill

{\footnotesize\noindent This version of the article has been accepted for publication, after peer review but is
	not the Version of Record and does not reflect post-acceptance improvements, or any corrections. The Version of Record is
	available online at: \url{https://doi.org/10.1007/s11005-025-02016-w}}

\pagebreak
\tableofcontents

\section{Lagrangian multiform formalism}
\label{sec:1}

Lagrangian multiform theory provides a variational principle for systems that are integrable in the sense of multidimensional consistency. It applies in the discrete case (where multidimensional consistency takes the form of commuting maps, consistency around the cube, etc), in the continuous case (where multidimensional consistency takes the form of commuting flows or involutive PDEs), and in semi-discrete settings.

The central object in continuous Lagrangian multiform theory is a differential $d$-form. Typically, a Lagrangian $d$-form describes a system of equations in $d$ independent variables each. For example,
\begin{itemize}
	\item a continuous Lagrangian 1-form describes commuting ODEs and can often be associated to a system of Poisson-commuting Hamiltonians \cite{yoo-kong2011discretetime, suris2013variational};
	\item a continuous Lagrangian 2-form describes a hierarchy of 2D PDEs, such as the (potential) Korteweg-de Vries hierarchy \cite{suris2016variational,suris2016lagrangian};
	\item continuous Lagrangian 3-forms describe hierarchies of 3D PDEs, such as the KP hierarchy \cite{sleigh2023lagrangian, nijhoff2023lagrangian}.
\end{itemize}
This does not mean that all the Euler-Lagrange equations of a Lagrangian $d$-form are $d$-dimensional PDEs, but rather that the system of Euler-Lagrange equations is of a $d$-dimensional nature. In this paper we will show examples of 2-forms for which the variational equations appear as a system of PDEs in 3 variables each. In addition, this paper establishes a new connection between Lagrangian multiforms and conservation laws, and provides examples of Lagrangian multiforms for Gibbons-Tsarev type systems.

\subsection{Multiform Euler-Lagrange equations}
We consider Lagrangian 2-forms
\begin{equation*} 
\cL = \sum_{i < j}L_{ij}\, \d x^i\wedge \d x^j.
\end{equation*}
Here $L_{ij}$ are some functions of the jet bundle variables $x^i, x^j, u, u_i, u_j, u_{ij}$ where $u$ is a function of the independent variables $x^1, \dots, x^n$ and $u_i=u_{x^i}, \ u_{ij}=u_{x^ix^j}$, etc, denote partial derivatives. We assume $L_{ji} = -L_{ij}$.
For a two-dimensional surface $\Gamma$ in the space of independent variables, we consider the action integral $S_\Gamma = \int_\Gamma \cL$. Lagrangian multiform theory requires that this action is critical for every choice of surface $\Gamma$, with respect to variations of $u$. (This requirement is also known as the \emph{pluri-Lagrangian} principle \cite{boll2014what, suris2016variational}.) This is the case if and only if the following three groups of \emph{multiform Euler-Lagrange} equations hold:
\begin{subequations}
\label{mfEL}
\begin{align}
	& \var{ij}{L_{ij}}{u} = 0 , \label{mfELa} \\
	& \var{ij}{L_{ij}}{u_j} - \var{ik}{L_{ik}}{u_k} = 0 , \label{mfELb} \\
	& \var{ij}{L_{ij}}{u_{ij}} + \var{jk}{L_{jk}}{u_{jk}} + \var{ki}{L_{ki}}{u_{ki}} = 0. \label{mfELc}
\end{align}
Here $\var{}{\cdots}{\cdots}$ denote variational derivatives which, under the assumption that $L_{ij}$ does not depend on third or higher derivatives of $u$, can be computed as
\begin{align*}
	& \var{ij}{}{u} = \der{}{u} - \partial_i \der{}{u_i} - \partial_j \der{}{u_j} +  \partial_i\partial_j \der{}{u_{ij}}, \\
	& \var{ij}{}{u_j} = \der{}{u_j} -\partial_i \der{}{u_{ij}}  , \\
	& \var{ij}{}{u_{ij}} = \der{}{u_{ij}},
\end{align*}
where $\partial_i = \der{}{x^i}$, etc. There is no summation over repeated indices.

If we allow the coefficients $L_{ij}$ to depend on derivatives with respect to $x^k$ for $k \neq i,j$ as well, but still assume no third or higher derivatives occur in $L_{ij}$, the system of multiform Euler-Lagrange equations \eqref{mfELa}--\eqref{mfELc} is extended by
\begin{align}
	\var{ij}{L_{ij}}{u_k} &= 0 \qquad \forall k \neq i,j , \label{mfEL-uk}\\
	\der{L_{ij}}{u_{k\ell}} &= 0 \qquad \forall k,\ell \neq i,j , \label{mfEL-ukl} \\
	\der{L_{ij}}{u_{j\ell}} - \der{L_{ik}}{u_{k\ell}} &= 0  \qquad \forall \ell \neq i . \label{mfEL-ujl}
\end{align}
\end{subequations}
Note that the definition of variational derivatives remains the same: the additional derivatives are only taken in the variables corresponding to the indices of $\delta$. In particular,
\[ \var{ij}{}{u_k} = \der{}{u_k} - \partial_i \der{}{u_{ik}} - \partial_j \der{}{u_{jk}} . \]

For this framework to be nontrivial, we need to impose the condition that the multiform Euler-Lagrange equations (also known as \emph{multi-time Euler-Lagrange equations}) are in involution, i.e.\@ that the overdetermined system \eqref{mfEL} admits solutions for
suitable generic boundary conditions.

The fundamental idea of Lagrangian multiforms was established in \cite{lobb2009lagrangian}. The multiform Euler-Lagrange equations for a 2-form $\cL$ depending on the second jet bundle first appeared in \cite{suris2016variational} and the multiform Euler-Lagrange equations for $\cL$ depending on any jet bundle were derived in \cite{suris2016lagrangian}. This derivation makes use of surfaces that are made up of flat pieces in coordinate directions. Then equation \eqref{mfELa} comes from the flat pieces, \eqref{mfELb} from the edges where two flat pieces meet, and \eqref{mfELc} from the corners where three or more flat pieces meet. For this reason, \eqref{mfELa} are called planar equations, \eqref{mfELb} edge equations and \eqref{mfELc} corner equations. Other derivations of the multiform Euler-Lagrange equations are possible. One can obtain equations (\ref{mfEL}) by restricting $\cL$ to an arbitrary plane in the space of independent variables (not necessarily  coordinate planes) and calculating the standard Euler-Lagrange equation, which should be satisfied for any such plane. Furthermore, in \cite{sleigh2023lagrangian} it was pointed out that equivalent equations are obtained by taking pointwise variations of $\d \cL$.

The connection between Lagrangian multiforms and conservation laws has been touched upon in \cite{suris2016variational, petrera2017variational} and was used in \cite{petrera2021variational, sleigh2020variational} to construct Lagrangian 2-forms for systems with known variational symmetries. These papers start with a known Lagrangian and use Noether's theorem to obtain conservation laws from which a multiform is then constructed. In the present work, we take conservation laws as our starting point and investigate the additional structure provided by Lagrangian multiform theory.

\subsection{Double zero property}

In many examples, the exterior derivative of $\cL$ factorises as
\begin{equation}
	\label{factorisation}
	\d \cL =\sum_{i<j<k} A_{ijk}B_{ijk}\, \d x^i \wedge \d x^j\wedge \d x^k,
\end{equation}
and it is observed that the system
\begin{equation}
	\label{ABsystem}
	A_{ijk}=0,\quad B_{ijk}=0
\end{equation}
is equivalent to the full system of multiform Euler-Lagrange equations \eqref{mfEL}. 
For example, for the Lagrangian multiform 
\begin{equation}\label{LMF}
	\cL = \sum_{i<j}(c^i-c^j)\frac{u_{ij}^2}{u_iu_j}\ \d x^i\wedge \d x^j,
\end{equation}
where $c^i=\text{const}$, the factorisation \eqref{factorisation} holds with
\begin{align*}
	A_{ijk} &= (c^i-c^j)\frac{u_{ij}}{u_iu_j}+(c^j-c^k)\frac{u_{jk}}{u_ju_k}+(c^k-c^i)\frac{u_{ik}}{u_iu_k}, \\
	B_{ijk} &= 2 u_{ijk} - \left(\frac{u_{ij}u_{ik}}{u_i}+\frac{u_{ij}u_{jk}}{u_j}+\frac{u_{ik}u_{jk}}{u_k} \right),
\end{align*}
while the multiform Euler-Lagrange equations  \eqref{mfEL} are
\begin{subequations} 
	\label{vweEL}
	\begin{align}
		&\!\left(\frac{u_{ij}^2}{u_i^2u_j}\right)_i+\left(\frac{u_{ij}^2}{u_iu_j^2}\right)_j+\left(\frac{2u_{ij}}{u_iu_j}\right)_{ij}=0, \label{vweEL1}\\
		&(c^j - c^i) \left( \frac{u_{ij}^2}{u_i u_j^2} + 2 \frac{u_{ii}u_{ij}}{u_i^2 u_j} - 2 \frac{u_{iij}}{u_i u_j}\right)  - (c^k - c^i) \left( \frac{u_{ik}^2}{u_i u_k^2} + 2 \frac{u_{ii}u_{ik}}{u_i^2 u_k} - 2 \frac{u_{iik}}{u_i u_k}\right) = 0,  \label{vweEL2}
		\\
		& (c^i-c^j)\frac{u_{ij}}{u_iu_j}+(c^j-c^k)\frac{u_{jk}}{u_ju_k}+(c^k-c^i)\frac{u_{ik}}{u_iu_k} = 0 . \label{vweEL3}
	\end{align}
\end{subequations}
This system is equivalent to the  equations $A_{ijk} = 0$, $B_{ijk} = 0$ (a fact which is not immediately obvious, see Appendix \ref{app-vwe}).
Equations $A_{ijk}=0$ form the so-called Veronese web hierarchy \cite{zakharevich2000nonlinear},
\begin{equation} \label{V}
	(c^i-c^j)u_ku_{ij}+(c^k-c^i)u_ju_{ik}+(c^j-c^k)u_iu_{jk}=0,
\end{equation}
while equations $B_{ijk}=0$,
\begin{equation} \label{E}
	u_{ijk}=\frac{1}{2}\left(\frac{u_{ij}u_{ik}}{u_i} +\frac{u_{ij}u_{jk}}{u_j}+\frac{u_{ik}u_{jk}}{u_k}\right),
\end{equation}
characterise potential (Egorov) metrics $\sum_i u_i (\d x^i)^2$ with \emph{diagonal curvature}, meaning that all curvature components $R^i_{kkj}$ with  distinct $i, j, k$ are identically zero. Equations (\ref{E}) have appeared in  \cite{schief2025affine} in the context of multidimensional consistency of partial differential equations.
Remarkably, equations (\ref{V}) and (\ref{E}), although coming from entirely different geometric contexts, are in involution. This can be checked by computing the Mayer bracket of $A_{ijk}$ and $B_{ijk}$ \cite{kruglikov2002mayer}. Furthermore, the first set of Euler-Lagrange equations, (\ref{vweEL1}), coincide with the  sigma-model governing harmonic maps of pseudo-Euclidean plane into a pseudo-Riemannian surface of constant curvature 1, see Appendix \ref{app-sigma}.

\begin{remark}
The Lagrangian multiform \eqref{LMF} possesses a translationally non-invariant version,
\begin{equation}\label{LMF1}
	\cL = \sum_{i,j}(x^i-x^j)\frac{u_{ij}^2}{u_iu_j}\ \d x^i\wedge \d x^j,
\end{equation}
with the corresponding variational equations
\begin{equation}\label{Vx}
	(x^i-x^j)u_{k}u_{ij}+(x^k-x^i)u_{j}u_{ik}+(x^j-x^k)u_{i}u_{jk}=0
\end{equation}
and \eqref{E}, which remains unchanged. Note that translationally non-invariant Veronese web equation \eqref{Vx} has appeared in \cite{kruglikov2017veronese, lobb2009lagrangian}.
\end{remark}

That the system $A_{ijk} = 0$, $B_{ijk} = 0$ implies the multiform Euler-Lagrange equations can be proved in general. (The various elements of the proof below can be found in \cite{petrera2021variational,sleigh2020variational,sleigh2023lagrangian,suris2016lagrangian}.)

\begin{theorem}
	\label{thm-double-zero}
	If the differential $\d \cL$ factorises as in equation \eqref{factorisation}, then the full system of multiform Euler-Lagrange equations \eqref{mfEL} follows from the system of equations $A_{ijk}=0, B_{ijk}=0.$
\end{theorem}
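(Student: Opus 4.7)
The plan is to invoke the characterisation from \cite{sleigh2023lagrangian} which reformulates the multiform Euler-Lagrange equations \eqref{mfEL} as the vanishing, on solutions, of variational derivatives of the coefficients of $\d \cL$. Writing
\[ \d \cL = \sum_{i<j<k} F_{ijk} \, \d x^i \wedge \d x^j \wedge \d x^k, \]
each of the equations in \eqref{mfELa}--\eqref{mfEL-ujl} can be expressed as a particular variational derivative of the coefficient $F_{ijk}$ with respect to some jet variable. Under the factorisation assumption $F_{ijk} = A_{ijk} B_{ijk}$, the proof therefore reduces to showing that every such variational derivative vanishes on the prolonged solution set $\Sigma$ of the system $A_{ijk} = 0$, $B_{ijk} = 0$.

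The core step is a ``double-zero lemma'': if $F = PQ$ where $P$ and $Q$ both lie in the differential ideal of $\Sigma$ (so that all total derivatives $D^\gamma P$ and $D^\gamma Q$ also vanish on $\Sigma$), then every variational derivative of $F$ with respect to a jet coordinate vanishes on $\Sigma$ as well. This follows directly from the Leibniz rule applied to the expansions
\[ \frac{\partial F}{\partial u_\alpha} = \frac{\partial P}{\partial u_\alpha}\, Q + P\, \frac{\partial Q}{\partial u_\alpha} , \qquad D_\ell F = (D_\ell P)\, Q + P\, (D_\ell Q), \]
after which every term contains, as a factor, an undifferentiated or total-differentiated copy of $P$ or $Q$. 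Iterating this observation through the structure $\delta/\delta u_\alpha = \partial/\partial u_\alpha - D_i\, \partial/\partial u_{\alpha i} + D_iD_j\, \partial/\partial u_{\alpha ij} - \cdots$ preserves membership in the differential ideal, so the entire expression vanishes on $\Sigma$.

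Applying this lemma to each $F_{ijk}$ and combining the results according to the explicit formulas for the operators $\delta_{ij}/\delta u$, $\delta_{ij}/\delta u_j$ and $\delta_{ij}/\delta u_{ij}$ (together with their analogues appearing in \eqref{mfEL-uk}--\eqref{mfEL-ujl}) yields the vanishing on $\Sigma$ of the left-hand sides of every multiform Euler-Lagrange equation. I expect the main obstacle to be the careful bookkeeping required to identify each entry of the extended system \eqref{mfELa}--\eqref{mfEL-ujl} with the appropriate variational derivative of a coefficient of $\d \cL$, keeping track of signs and of the distinction between planar, edge and corner contributions. Once this identification is in place, the double-zero mechanism handles all cases by the same uniform argument, and the theorem follows.
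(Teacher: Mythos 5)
Your proof is correct, but it takes a genuinely different route from the paper's. The paper argues directly with the action: for an arbitrary surface $\Gamma$ and a compactly supported variation it replaces $\Gamma$ locally by the boundary of a three-dimensional volume $V$, applies Stokes' theorem to write $\delta S_\Gamma = \delta \int_V \d\cL$, and then uses the product rule on $A_{ijk}B_{ijk}$ so that the first variation is manifestly a linear combination of the $A_{ijk}$ and $B_{ijk}$ and hence vanishes on solutions; since the multiform Euler--Lagrange equations \eqref{mfEL} are by definition the conditions for criticality of $S_\Gamma$ on every surface, the theorem follows with no need to match individual equations. Your argument instead works entirely on the jet bundle: it imports from \cite{sleigh2023lagrangian} the equivalence of \eqref{mfEL} with the vanishing of all (higher) variational derivatives of the coefficients of $\d\cL$, and then disposes of these via the Leibniz-rule ``double-zero lemma'', which is sound --- every term of $(-D)_J\,\partial(PQ)/\partial u_{\alpha I}$ carries a factor $D_K P$ or $D_K Q$ and so vanishes on the infinite prolongation of $\{P=0,\ Q=0\}$. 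What your route buys is a purely algebraic statement that never mentions surfaces or bounding volumes and extends verbatim to coefficients that are sums of products $\sum_\ell A^\ell_{ijk}B^\ell_{ijk}$ and to Lagrangian $d$-forms of any degree; what it costs is reliance on the cited characterisation, whose verification is exactly the sign- and index-bookkeeping you flag as the main obstacle, whereas the paper's Stokes argument bypasses that bookkeeping entirely.
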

\begin{proof}
Consider an arbitrary surface $\Gamma$ and a variation $v$ of $u$, leading to a variation of the action
\[ \delta S_\Gamma =  \frac{\d}{\d \varepsilon} \bigg|_{\varepsilon=0} \int_\Gamma \cL[u + \varepsilon v] . \]
Without loss of generality we can assume that $v$ is supported on a small neighbourhood $\Omega \in \mathbb{R}^n$. Then $\delta S_\Gamma = \delta S_{\Gamma \cap \Omega}$ and we can find a three-dimensional volume $V$ such that inside of $\Omega$, the boundary of $V$ coincides with $\Gamma$, i.e.\@ $\Gamma \cap \Omega = \partial V \cap \Omega$. Then
\begin{align*}
	\delta S_\Gamma &= \delta S_{\partial V \cap \Omega} = \delta S_{\partial V} 
	= \frac{\d}{\d \varepsilon} \bigg|_{\varepsilon=0} \int_V \d \cL[u + \varepsilon v] \\
	&= \frac{\d}{\d \varepsilon} \bigg|_{\varepsilon=0} \int_V \sum_{i < j < k} A_{ijk}[u + \varepsilon v]B_{ijk}[u + \varepsilon v] \, \d x^i\wedge \d x^j\wedge \d x^k \\
	&= \int_V \sum_{i < j < k} \left(A_{ijk}[u] \frac{\d B_{ijk}[u + \varepsilon v]}{\d \varepsilon} + B_{ijk}[u] \frac{\d A_{ijk}[u + \varepsilon v]}{\d \varepsilon} \right)\! \bigg|_{\varepsilon=0} \d x^i\wedge \d x^j\wedge \d x^k,
\end{align*}
which is zero on solutions of the system $A_{ijk}=0, B_{ijk}=0$. Since $\Gamma$ was chosen arbitrarily, this implies that the multiform Euler-Lagrange equations hold.
\end{proof}

In the situation of Theorem \ref{thm-double-zero}, it is said that $\d \cL$ attains a \emph{double zero} on solutions of the system $A_{ijk}=0,\, B_{ijk}=0$. In other examples, the coefficients of $\d \cL$ are sums of factorised terms,
\[ d \cL =\sum_{i<j<k} \left( \sum_\ell A_{ijk}^\ell B_{ijk}^\ell \right) \, \d x^i\wedge \d x^j\wedge \d x^k , \]
so that $\d \cL$ attains a double zero on the system $A_{ijk}^\ell = 0,\, B_{ijk}^\ell = 0$. Again, this system implies the multiform Euler-Lagrange equations.

Factorisations of $\d \cL$ were first used in a narrower sense in \cite{petrera2021variational} and \cite{sleigh2020variational}. Since then, arguments involving double zeros have inspired the construction of Lagrangian multiforms in several different settings \cite{sleigh2022semidiscrete, caudrelier2024lagrangian,nijhoff2023lagrangian,richardson2025discrete,richardson2025discretea}.

The Lagrangian multiform (\ref{LMF}) has two important properties that motivated the structure of the present paper:

\begin{itemize}
	\item The relation $\d \cL=0 \ (\text{mod }A_{ijk}=0)$ suggests that $\cL$ is a (second-order) conservation law of the Veronese web hierarchy (\ref{V}), for which $B_{ijk}$ can be viewed as its characteristic. In Section \ref{sec2} we provide further examples of this kind  by using second-order conservation laws of some linearly degenerate PDEs in 3D such as the Veronese web hierarchy and the Mikhalev equation. It is yet to be explored to what extent conservation laws of integrable PDEs/hierarchies can be viewed as a common source of Lagrangian multiforms.
	
	\item It will be shown in Section \ref{sec3} that equations \eqref{V}--\eqref{E} coincide with the Gibbons-Tsarev equations governing hydrodynamic reductions of the 4D second heavenly equation. Further examples (such as  the first heavenly equation and a 6D version of the second heavenly equation) suggest that Gibbons-Tsarev type equations governing hydrodynamic reductions of linearly degenerate dispersionless integrable PDEs possess a Lagrangian multiform representation. Such Gibbons-Tsarev equations are only of interest in dimensions $d\geq 4$ as, according to \cite{odesskii2009systems}, Gibbons-Tsarev equations of three-dimensional linearly degenerate PDEs are linearisable. In contrast, Gibbons-Tsarev equations of higher-dimensional linearly degenerate PDEs are not linearisable (not even Darboux integrable).	
	 
\end{itemize}

This paper is of an experimental nature, in the sense that it presents a number of intriguing examples. They support two observations that relate to the points above and appear to be new:
(1) that Lagrangian 2-forms appear as (higher) conservation laws of various 3D integrable PDEs;
(2) that Lagrangian multiforms appear in the context of  Gibbons-Tsarev equations governing hydrodynamic reductions of various heavenly-type PDEs in 4D.

\section{Lagrangian multiforms and conservation laws of linearly degenerate PDEs}
\label{sec2}

The examples below support a point of view of Lagrangian multiforms as (higher) conservation laws of integrable PDEs (hierarchies).	Each of these examples starts from a 3D PDE with a known conservation law, without assuming any Lagrangian structure a priori. This is in contrast to previous works \cite{petrera2017variational,petrera2021variational,sleigh2020variational}, where Lagrangian multiforms are constructed by starting from a known Lagrangian and its variational symmetries.

\subsection{Veronese web hierarchy }

The Veronese web hierarchy (\ref{V}) possesses a second-order conservation law,
$$
\sum_{i<j}(c^i-c^j)\frac{u_{ij}^2}{u_iu_j}\, \d x^i\wedge \d x^j,
$$
as well as $n+1$ first-order conservation laws,
$$
\sum_{j\ne i}\frac{1}{c^i-c^j}\frac{u_j}{u_i}\, \d x^i\wedge \d x^j, \qquad \sum_{i< j}(c^i-c^j)u_iu_j\, \d x^i\wedge \d x^j ;
$$
in the first of these expressions, the index $i\in \{1, \dots, n\}$ is fixed, while $j$ varies (here under the `order' of a conservation law we understand the highest order of partial derivatives of $u$ involved). Taking their linear combination gives Lagrangian multiform
$$
\cL = \sum_{i < j} L_{ij}\, \d x^i\wedge \d x^j, \qquad L_{ij}=(c^i-c^j)\frac{u_{ij}^2}{u_iu_j}+\frac{1}{c^i-c^j}\left(n^j\frac{u_i}{u_j}+n^i\frac{u_j}{u_i} \right)+\varepsilon (c^i-c^j)u_iu_j,
$$
which can be seen as a multi-parameter deformation of the Lagrangian multiform (\ref{LMF}); here the constants $n^i$ and $ \varepsilon$ are the deformation parameters. The exterior derivative factorises as in equation \eqref{factorisation}, with
\[ A_{ijk} = (c^i-c^j)u_{k}u_{ij}+(c^j-c^k)u_{i}u_{jk}+(c^k-c^i)u_{j}u_{ik}=0 \]
and
\begin{align*}
	B_{ijk} &= 2 u_{ijk} - \left(\frac{u_{ij}u_{ik}}{u_i}+ \frac{u_{ij}u_{jk}}{u_j}+\frac{u_{ik}u_{jk}}{u_k}\right)\\
	&\quad + \frac{n^i u_ju_k}{(c^i-c^j)(c^i-c^k)u_i} +  \frac{n^j u_iu_k}{(c^j-c^i)(c^j-c^k)u_j} + \frac{n^k u_iu_j}{(c^k-c^i)(c^k-c^j)u_k} - \varepsilon u_iu_ju_k.
\end{align*}
Hence, by Theorem \ref{thm-double-zero}, the system $A_{ijk} = 0$, $B_{ijk} = 0$ implies the  multiform Euler-Lagrange equations, which are (compare with \eqref{vweEL}):
\begin{align*}
	&\!\left(\frac{u_{ij}^2}{u_i^2u_j} +\frac{1}{c^i-c^j}\left(\frac{n^j}{u_j} - \frac{n^i u_j}{u_i^2} \right)+\varepsilon (c^i-c^j)u_j \right)_i \\
	&\qquad + \left(\frac{u_{ij}^2}{u_iu_j^2} +\frac{1}{c^i-c^j}\left( -\frac{n^j u_i}{u_j^2} + \frac{n^i}{u_i} \right) + \varepsilon (c^i-c^j)u_i \right)_j + \left(\frac{2u_{ij}}{u_iu_j}\right)_{ij}=0, \\
	&(c^j - c^i) \left( \frac{u_{ij}^2}{u_i u_j^2} + 2 \frac{u_{ii}u_{ij}}{u_i^2 u_j} - 2 \frac{u_{iij}}{u_i u_j} - \frac{n^i}{u_i} - \frac{n^j u_i}{u_j^2} - \frac{n^k u_i}{u_k^2} + \varepsilon u_i \right) \\
	&\qquad - (c^k - c^i) \left( \frac{u_{ik}^2}{u_i u_k^2} + 2 \frac{u_{ii}u_{ik}}{u_i^2 u_k} - 2 \frac{u_{iik}}{u_i u_k} - \frac{n^i}{u_i} - \frac{n^j u_i}{u_j^2} - \frac{n^k u_i}{u_k^2} + \varepsilon u_i \right) = 0,  
	\\
	& (c^i-c^j)\frac{u_{ij}}{u_iu_j}+(c^j-c^k)\frac{u_{jk}}{u_ju_k}+(c^k-c^i)\frac{u_{ik}}{u_iu_k} = 0 .
\end{align*}
We emphasise that the system  $A_{ijk} = 0$, $B_{ijk} = 0$ is involutive, with the general solution depending on $2n$ arbitrary functions of one variable.

\subsection{Translationally non-invariant Veronese web hierarchy}

The translationally non-invariant Veronese web hierarchy (\ref{Vx}) also possesses a second-order conservation law,
$$
\sum_{i,j}(x^i-x^j)\frac{u_{ij}^2}{u_iu_j}\, \d x^i\wedge \d x^j,
$$
as well as $n+1$ first-order conservation laws,
$$
\sum_{j\ne i}\frac{1}{x^i-x^j}\frac{u_j}{u_i}\, \d x^i\wedge \d x^j, \qquad \sum_{i, j}(x^i-x^j)u_iu_j\, \d x^i\wedge \d x^j;
$$
in the first of these expressions, $i\in \{1, \dots, n\}$ is fixed, while $j$ varies. Taking their linear combination gives the Lagrangian multiform,
$$
{\cal L}=\sum L_{ij}\, \d x^i\wedge \d x^j, \qquad L_{ij}=(x^i-x^j)\frac{u_{ij}^2}{u_iu_j}+\frac{1}{x^i-x^j}\left(n^j\frac{u_i}{u_j}+n^i\frac{u_j}{u_i} \right)+\varepsilon (x^i-x^j)u_iu_j,
$$
which can be seen as a multi-parameter deformation of the Lagrangian multiform (\ref{LMF1}). This Lagrangian multiform, with $\varepsilon = 0$, has appeared in \cite{lobb2009lagrangian}. The corresponding variational equations $A_{ijk}=0$ and $B_{ijk}=0$ are
\begin{equation*}
(x^i-x^j)u_{k}u_{ij}+(x^j-x^k)u_{i}u_{jk}+(x^k-x^i)u_{j}u_{ik}=0
\end{equation*}
and
\begin{align*}
&u_{ijk} =\frac{1}{2}\left(\frac{u_{ij}u_{ik}}{u_i}+ \frac{u_{ij}u_{jk}}{u_j}+\frac{u_{ik}u_{jk}}{u_k}\right)\\
& - \frac{n^i u_ju_k}{2(x^i-x^j)(x^i-x^k)u_i} - \frac{n^j u_iu_k}{2(x^j-x^i)(x^j-x^k)u_j} -  \frac{n^k u_iu_j}{2(x^k-x^i)(x^k-x^j)u_k} + \varepsilon \frac{u_iu_ju_k}{2},
\end{align*}
respectively. As in the previous case,  the system  $A_{ijk} = 0$, $B_{ijk} = 0$ is involutive, with the general solution depending on $2n$ arbitrary functions of one variable.

\subsection{Mikhalev equation}

The Mikhalev equation has the form
\begin{equation}\label{Mik}
u_{33}-u_{12}+u_3u_{11}-u_1u_{13}=0.
\end{equation}
It first appeared in \cite{mikhalev1992hamiltonian}, 
in the context of the Hamiltonian formalism of KdV type hierarchies. It possesses four first-order and seven second-order conservation laws (see \cite{baran2014higher} for the characteristics of the latter), 
$$
\cL =F\, \d x^2\wedge \d x^3+G\, \d x^3\wedge \d x^1+H\, \d x^1\wedge \d x^2.
$$ 
For any such conservation law (or a linear combination thereof), the differential $\d\cL=(F_1+G_2+H_3)\, \d x^1\wedge \d x^2\wedge \d x^3$  factorises as 
$$
F_1+G_2+H_3=(u_{33}-u_{12}+u_3u_{11}-u_1u_{13})\cdot (\Sigma),
$$
where the characteristic $\Sigma$ is a differential expression in $u$ such that the combined system $\{u_{33}-u_{12}+u_3u_{11}-u_1u_{13}=0,\ \Sigma=0\}$ is involutive. Thus, these conservation laws give rise to Lagrangian multiforms, where $L_{12} = H$, $L_{13} = -G$, and $L_{23} = F$.
Below we discuss some of the simplest examples of this constuction.

\paragraph{Case 1.} Let
\begin{align*}
	&F=u_3u_{11}^2-u_{13}^2-2u_{11}(u_{33}-u_{12}+u_3u_{11}-u_1u_{13}),\\
	&G=-u_{11}^2, \\
	&H=2u_{11}u_{13}-u_1u_{11}^2.
\end{align*}
Due to the factorisation
$$
F_1+G_2+H_3=-2(u_{33}-u_{12}+u_3u_{11}-u_1u_{13})\, u_{111},
$$
as well as the involutivity of  the combined system,
\begin{equation}
	\label{mikhalev-1AB}
	u_{33}-u_{12}+u_3u_{11}-u_1u_{13}=0, \qquad u_{111}=0,
\end{equation}
$\cL$ has all properties of a Lagrangian multiform. The multiform Euler-Lagrange equations \eqref{mfEL} in this case are all immediate consequences of the system \eqref{mikhalev-1AB}. In particular, we find the two equations of \eqref{mikhalev-1AB} as multiform Euler-Lagrange equations of type \eqref{mfEL-ukl} and \eqref{mfELb}:
\begin{align*}
	& \var{23}{F}{u_{11}} = \der{F}{u_{11}} = -2(u_{33}-u_{12}+ u_3 u_{11}-u_1u_{13}) , \\
	& \var{31}{G}{u_1} + \var{23}{F}{u_2} = -\partial_1 \der{G}{u_{11}} + 0 = 2 u_{111} .
\end{align*}

This example can be deformed by adding to $\cL$ a first-order conservation law of the Mikhalev equation, 
$$
\left( 2u_1u_3^2-u_1^3u_3-u_2u_3 \right)_1 + \left( u_1^3-u_1u_3 \right)_2 + \left( u_1^4-3u_1^2u_3+u_1u_2+u_3^2 \right)_3=0.
$$
Thus, we take
\begin{align*}
&\tilde F=u_3u_{11}^2-u_{13}^2-2u_{11}(u_{33}-u_{12}+u_3u_{11}-u_1u_{13})+2u_1u_3^2-u_1^3u_3-u_2u_3,\\
&\tilde G=-u_{11}^2+u_1^3-u_1u_3,\\
&\tilde H=2u_{11}u_{13}-u_1u_{11}^2+u_1^4-3u_1^2u_3+u_1u_2+u_3^2,
\end{align*}
with the factorisation 
$$
\tilde F_1+\tilde G_2+\tilde H_3= 2 \left(u_{33}-u_{12}+u_3u_{11}-u_1u_{13} \right) \left( u_3-u_{111}-\frac{3}{2}u_1^2 \right).
$$
Due to involutivity of the combined system,
\begin{equation}
	\label{mikhalev-1-deformed-AB}
u_{33}-u_{12}+u_3u_{11}-u_1u_{13}=0, \qquad u_3-u_{111}-\frac{3}{2}u_1^2=0,
\end{equation}
the deformed $\tilde \cL = \tilde F\, \d x^2\wedge \d x^3 + \tilde G\, \d x^3\wedge \d x^1 + \tilde H\, \d x^1\wedge \d x^2$ is also a Lagrangian multiform. Note that the second equation is the potential KdV equation.

Again, we find the two equations of \eqref{mikhalev-1-deformed-AB} as multiform Euler-Lagrange equations of type \eqref{mfEL-ukl} and \eqref{mfELb}:
\begin{align*}
	& \var{23}{\tilde F}{u_{11}} = \der{\tilde F}{u_{11}} = -2(u_{33}-u_{12}+ u_3 u_{11}-u_1u_{13}) , \\
	& \var{31}{\tilde G}{u_1} + \var{23}{\tilde F}{u_2} = \der{\tilde G}{u_{1}} - \partial_1 \der{\tilde G}{u_{11}} + \der{\tilde F}{u_2} 
	= (3 u_1^2 - u_3) + 2 u_{111} - u_3,
\end{align*}
and all other multiform Euler-Lagrange equations are consequences of these.

\paragraph{Case 2.} Let
\begin{align*}
& F=(2u_3u_{11}-u_{12})u_{13}-u_1u_3u_{11}^2+2(u_1u_{11}-u_{13})(u_{33}-u_{12}+u_3u_{11}-u_1u_{13}), \\
& G=u_1u_{11}^2-u_{11}u_{13}, \\ & H=u_1^2u_{11}^2-2u_1u_{11}u_{13}-u_3u_{11}^2+u_{11}u_{12}+u_{13}^2.
\end{align*}
Due to the factorisation
$$
F_1+G_2+H_3 = 2 \left(u_{33}-u_{12}+u_3u_{11}-u_1u_{13} \right) \left( u_1u_{111}-u_{113}+\frac{1}{2}u_{11}^2 \right),
$$
as well as the involutivity of  the combined system,
$$
u_{33}-u_{12}+u_3u_{11}-u_1u_{13}=0, \qquad u_1u_{111}-u_{113}+\frac{1}{2}u_{11}^2=0,
$$
$\cL$ is a Lagrangian multiform. Note that the second equation is equivalent to the Hunter-Saxton equation, 
\[ (v_3-vv_1)_1+\frac{1}{2}v_1^2=0,\]
for $v=u_1$, which is an integrable PDE that arises in the theory of nematic liquid crystals \cite{hunter1991dynamics}.
This equation appears as a multiform Euler-Lagrange equation of type \eqref{mfELb}:
\[ \var{13}{G}{u_1} - \var{23}{F}{u_2} = -u_{11}^2 - 2 u_1 u_{111} + 2 u_{113}
	= 2\left( (u_{13}-u_1u_{11})_1+\frac{1}{2}u_{11}^2 \right).
\]

\paragraph{Case 3.} Let
\begin{align*}
&F=u_3(u_3-u_1^2)u_{11}^2+2(u_3-u_1^2)(u_{33}-u_{12}-u_1u_{13})u_{11}
\\&\qquad -2u_1^2u_{13}^2+2u_1u_{13}u_{33}+u_3u_{13}^2+u_{12}(u_{12}-2u_{33}), \\
&G=-(u_1u_{11}-u_{13})^2, \\
&H=(u_1u_{11}-u_{13})(-u_1^2u_{11}+u_1u_{13}+2u_3u_{11}-2u_{12}).
\end{align*}
Due to the factorisation
$$
F_1+G_2+H_3=-2(u_{33}-u_{12}+u_3u_{11}-u_1u_{13})(u_{112}+(u_1^2-u_3)u_{111}-u_1u_{113}+u_1u_{11}^2-u_{11}u_{13}),
$$
as well as the involutivity of  the combined system,
$$
u_{33}-u_{12}+u_3u_{11}-u_1u_{13}=0, \quad u_{112}+(u_1^2-u_3)u_{111}-u_1u_{113}+u_1u_{11}^2-u_{11}u_{13}=0,
$$
$\cL$ is a Lagrangian multiform. Note that, modulo (\ref{Mik}),  the second equation is equivalent to the Gurevich-Zybin equation, 
\[ (\partial_3-v\, \partial_1)^2v=0, \]
for $v=u_1$, which is known to be linearisable by a reciprocal transformation, see 
\cite{gurevich1988nondissipative,gurevich1995largescale,pavlov2005gurevich}.
This equation appears as a multiform Euler-Lagrange equation of type \eqref{mfELb}:
\[ \var{13}{G}{u_1} - \var{23}{F}{u_2} = 2 u_{1} u_{11}^{2} + 2 u_{1}^{2} u_{111} - 4 u_{1} u_{113} - 2 u_{11} u_{13} + 2 u_{133}
= 2 (\partial_3-u_1 \, \partial_1)^2 u_1 .
\]

An equivalent multiform can be obtained by subtracting $(u_{33} - u_1u_{13} + u_{11}u_3 - u_{12})^2$ from $F$:
\begin{align*}
\tilde F &= 2 u_{1}^{3} u_{11} u_{13} - u_{1}^{2} u_{11}^{2} u_{3} + 2 u_{1}^{2} u_{11} u_{12} - 3 u_{1}^{2} u_{13}^{2} - 2 u_{1}^{2} u_{11} u_{33} - 2 u_{1} u_{12} u_{13} + u_{13}^{2} u_{3} \\
	&\qquad + 4 u_{1} u_{13} u_{33} - u_{33}^{2},\\
	\tilde G &= - \left( u_{1}u_{11} - u_{13} \right)^2, \\
	\tilde H &= -u_{1}^{3} u_{11}^{2} + 2 u_{1}^{2} u_{11} u_{13} + 2 u_{1} u_{11}^{2} u_{3} - 2 u_{1} u_{11} u_{12} - u_{1} u_{13}^{2} - 2 u_{11} u_{13} u_{3} + 2 \, u_{12} u_{13}.
\end{align*}
Its exterior derivative factorises as
\[ \tilde F_1 + \tilde G_2 + \tilde H_3 = -2(u_{33}-u_{12}+u_3u_{11}-u_1u_{13}) (u_{1} u_{11}^{2} + u_{1}^{2} u_{111} - 2 u_{1} u_{113} - u_{11} u_{13} + u_{133} ) ,
\]
where the second factor is exactly the Gurevich-Zybin equation.

Similar second-order conservation laws, as well as the associated Lagrangian multiforms, can be constructed for other 3D linearly degenerate second-order dispersionless integrable PDE as classified in \cite{ferapontov2015linearly}. We refer to \cite{baran2014higher, Morozov2022higher} where the characteristics of these conservation laws were calculated using third-order symmetries of the cotangent coverings of these equations.

\section{Lagrangian multiforms and Gibbons-Tsarev equations }
\label{sec3}

In this section we present Lagrangian multiforms for some Gibbons-Tsarev equations. To introduce Gibbons-Tsarev equations, we begin with a brief review of the method of hydrodynamic reductions.
In the most general setting, the method of hydrodynamic reductions applies to quasilinear PDEs of the form
\begin{equation}
\sum_{i=1}^d A^i(u)\, u_{x^i}=0,
\label{1}
\end{equation}
or any other classes of PDEs transformable into  form (\ref{1}), see below. Here $u$ is a (column) vector-function of $d$ independent variables $x^1, \dots, x^d$, and $u_{x^i}$ denote partial derivatives (the matrices $A^i(u)$ do not need to be square).
For definiteness, let us consider $4$-dimensional PDEs ($d=4$) with four independent variables  $t, x, y, z$. 
Let us look for  solutions of  (\ref{1}) in the form
${ u}={ u}(R)$  where the \emph{Riemann invariants}
$R=\{R^1, ..., R^n\}$ solve a triple of
commuting diagonal systems
\begin{equation}
R^i_t=\lambda^i(R)\, R^i_x, \qquad R^i_y=\mu^i(R)\, R^i_x, \qquad R^i_z=\eta^i(R)\,R^i_x.
\label{R}
\end{equation}
Note that the number $n$ of Riemann invariants is allowed to be
arbitrary.  Thus, the original  multi-dimensional equation (\ref{1}) is
decoupled into a collection of commuting  $(1+1)$-dimensional systems in Riemann invariants. 
Solutions of this type, known as nonlinear interactions of $n$ planar
simple waves, were  investigated in gas dynamics and
magnetohydrodynamics  \cite{burnat1970method, peradzynski1971nonlinear}. Later on, they 
reappeared in the context of the dispersionless KP hierarchy
\cite{kodama1989method, gibbons1996reductions, gibbons1999conformal}.

We recall, see \cite{tsarev1991geometry}, that the requirement of  
commutativity of the flows (\ref{R})
is equivalent to the following restrictions on their \emph{characteristic speeds}:
\begin{equation}
\frac{\lambda_j
^i}{\lambda^j-\lambda^i}=\frac{\mu_j^i}{\mu^j-\mu^i}=\frac{\eta_j^i}{\eta^j-\eta^i},
\label{comm}
\end{equation}
$i\ne j, \  \lambda_j^i=\frac{\partial}{\partial { R^j}}\lambda^i$, etc.  Substituting ${ u}(R)$ into (\ref{1}) and using
(\ref{R}), one  arrives at an over-determined system of equations for  $u(R)$ and the characteristic speeds $\lambda^i(R), \mu^i(R), \eta^i(R)$, known as  Gibbons-Tsarev equations.  These equations imply, in particular, that the characteristic speeds $\lambda^i, \mu^i$ and $\eta^i$ must satisfy an algebraic relation which can be interpreted 
as the dispersion relation of system (\ref{1}).

 One can show that the maximum ``amount''  of $n$-component reductions a $d$-dimensional PDE 
may possess is parametrised, modulo changes of variables $R^i\to
f^i(R^i)$,  by $(d-2)n$ arbitrary functions of one variable. 

\begin{definition}[\cite{ferapontov2004integrability, ferapontov2004hydrodynamic}]  A $d$-dimensional system (\ref{1}) is said to be integrable if its $n$-component reductions are locally parametrised by $(d-2)n$ arbitrary functions of one variable.
\end{definition}

Although integrability in the sense of hydrodynamic reductions is somewhat different from the familiar solitonic integrability, is has all attributes of any ``reasonable'' definition of integrability: 
\begin{itemize}

\item it is based on exact solutions (coming from the method of hydrodynamic reductions) which are locally dense in the space of all solutions of a given PDE; these solutions can be considered as natural `dispersionless analogues' of  multisoliton solutions of conventional integrable PDEs;

\item it is algorithmically verifiable;

\item it leads to classification results of  integrable PDEs within various particularly interesting classes;

\item it can be reformulated geometrically  as a certain involutivity condition of the principal symbol of the given PDE system.

\end{itemize}

All these properties are thoroughly discussed in \cite{ferapontov2004integrability, ferapontov2004hydrodynamic, ferapontov2014dispersionless, berjawi2020secondorder, berjawi2022secondorder}. Our main observation is that many Gibbons-Tsarev type equations governing hydrodynamic reductions of various dispersionless integrable systems  possess a Lagrangian multiform representation, in all dimensions $d\geq 4$. Below we provide Lagrangian multiform representations of Gibbons-Tsarev equations 
for some well-known integrable PDEs.

\subsection{Second heavenly equation}

Pleba\'nski's second heavenly equation,
\begin{equation}
\theta_{tx}+\theta_{zy}+\theta_{xx}\theta_{yy}-\theta^2_{xy}=0,
\label{h1}
\end{equation}
describes self-dual Ricci-flat metrics of the form
$$
\d s^2 = \d x \, \d t + \d y \, \d z + \theta_{yy} \, \d x^2-2\theta_{xy} \, \d x \, \d y + \theta_{xx} \, \d y^2,
$$
see \cite{plebanski1975solutions}. Hydrodynamic reductions of equation (\ref{h1}) were discussed in  \cite{ferapontov2004integrability}. Introducing the notation $\theta_{xx}=u, $ $ \theta_{xy}=v, $ $ \theta_{yy}=w, $ $ \theta_{tx}=p, $ $ \theta_{zy}=v^2-uw-p$, one first rewrites 
(\ref{h1}) in quasilinear form (\ref{1}),
\begin{equation}
\begin{split}
&u_y=v_x, \qquad u_t=p_x, \qquad v_y=w_x, \qquad v_t=p_y, \\
&v_z=(v^2-uw-p)_x, \qquad w_z=(v^2-uw-p)_y.
\end{split}
\label{h2}
\end{equation}
Hydrodynamic reductions are sought in the form $u=u(R^1, ..., R^n),$ $v=v(R^1, ..., R^n),$ $w=w(R^1, ..., R^n),$ $p=p(R^1, ..., R^n)$  where the Riemann invariants $R^i$ satisfy equations \eqref{R}. For every solution of \eqref{R} we require that the corresponding $u,v,w,p$ solve \eqref{h2}. This implies 
\begin{equation}
 p_i=\lambda^i u_i, \qquad  v_i=\mu^i u_i, \qquad w_i=(\mu^i)^2 u_i,
\label{h4}
\end{equation}
along with the dispersion relation
\begin{equation}
\lambda^i=2v\mu^i-w-u(\mu^i)^2-\mu^i\eta^i.
\label{h5}
\end{equation}
Recall that low indices in (\ref{h4}) indicate partial derivatives by the variables $R^i$. Substituting  $\lambda^i$ into the commutativity conditions (\ref{comm})
and taking into account that the compatibility conditions for the relations $ p_i=\lambda^i u_i, \  v_i=\mu^i u_i$ imply
$$
u_{ij}=\frac{\mu_j^i}{\mu^j-\mu^i}u_i+\frac{\mu_i^j}{\mu^i-\mu^j}u_j.
$$
One arrives at the following Gibbons-Tsarev type equations:
\begin{equation}
\begin{split}
& \mu_j^i=\frac{(\mu^j-\mu^i)^2}{\eta^j-\eta^i+u(\mu^j-\mu^i)} \ u_j, \\
& \eta^i_j=\frac{(\mu^j-\mu^i)(\eta^j-\eta^i)}{\eta^j-\eta^i+u(\mu^j-\mu^i)} \ u_j, \\
& u_{ij}=2\frac{\mu^j-\mu^i}{\eta^j-\eta^i+u(\mu^j-\mu^i)} \ u_iu_j.
\end{split}
\label{h6}
\end{equation}
Solving  equations (\ref{h6}) for $\mu^i,  \eta^i$ and $u$, determining $\lambda^i$ from (\ref{h5}) and calculating $p, v, w$ from  equations (\ref{h4}) 
(which are automatically compatible by virtue of (\ref{h6})), one obtains a general $n$-component hydrodynamic reduction of the second heavenly equation.
Moreover, the commutativity conditions  will also be satisfied identically.

We emphasize that  system (\ref{h6}) is in involution and its general solution depends on $3n$ arbitrary functions of one variable.
Indeed, one can arbitrarily prescribe the restrictions of $\mu^i$ and $\eta^i$ to the $R^i$-coordinate line. This  gives $2n$ arbitrary functions. Moreover, one can arbitrarily 
prescribe the restriction of $u$ to each of the coordinate lines, which provides extra $n$  arbitrary functions. Since  reparametrizations 
$R^i\to f^i(R^i)$ leave the system (\ref{h6}) invariant, one concludes that  general $n$-component reductions are locally parametrized by $2n$ arbitrary functions of one variable. This supports the evidence that the heavenly equation (\ref{h1}) is a  four-dimensional integrable PDE \cite{ferapontov2004integrability}.

Let us proceed with the analysis of the Gibbons-Tsarev system (\ref{h6}). Introducing $c^i=\eta^i+u\mu^i-v$, one readily obtains $c^i_j=0$ so that $c^i=c^i(R^i)$ are arbitrary functions of the indicated variables. Ultimately, system (\ref{h6}) simplifies to
\begin{subequations}
\label{h7}
\begin{align}
& \mu^i_j=\frac{(\mu^j-\mu^i)^2}{c^j-c^i} \ u_j, \label{h7a}\\
& u_{ij}=2\frac{\mu^j-\mu^i}{c^j-c^i} \ u_iu_j. \label{h7b}
\end{align}
\end{subequations}
The elimination of $\mu$'s from system (\ref{h7}) leads  to  equations for the variable $u$ alone. These can be obtained as follows. The equations (\ref{h7b}) lead to second-order PDEs for $u$ that form a translationally non-invariant deformation of the Veronese web hierarchy,
\begin{equation*}
(c^j-c^i)u_ku_{ij}+(c^i-c^k)u_ju_{ik}+(c^k-c^j)u_iu_{jk}=0,
\end{equation*}
see  \cite{zakharevich2000nonlinear, krynski2016paraconformal, lobb2009lagrangian, kruglikov2017veronese}.  Differentiating the equations \eqref{h7b} by $R^k$, one obtains a collection of third-order PDEs (\ref{E}) for $u$,
\begin{equation*}
u_{ijk}=\frac{1}{2}\left(\frac{u_{ij}u_{ik}}{u_i} +\frac{u_{ij}u_{jk}}{u_j}+\frac{u_{ik}u_{jk}}{u_k}\right).
\end{equation*}
Upon the identification $R^i\leftrightarrow x^i$, these equations correspond to Lagrangian multiform (\ref{LMF}),
\begin{equation*} 
{\cal L}=\sum_{i,j}(c^i-c^j)\frac{u_{ij}^2}{u_iu_j}\ \d R^i\wedge \d R^j,
\end{equation*}
where $c^i(R^i)$ are arbitrary functions of the indicated variables (the choices $c^i=R^i$ and $c^i=const$ are of particular interest).

\paragraph{Lagrangian multiform for system \eqref{h7}.} System \eqref{h7} is related to the Lagrangian multiform
\begin{equation}\label{LMFnew}
	\cL = \sum_{i,j} \left( \mu^j_i u_j - \mu^i_j u_i + \frac{(\mu^j - \mu^i)^2}{c^j - c^i} u_i u_j \right) \d R^i\wedge \d R^j .
\end{equation}
Its multiform Euler-Lagrange equations are as follows:
\begin{itemize}
	\item \eqref{mfELb} yields
	\[ \mu_j^i + \frac{(\mu^i - \mu^j)^2}{c^i- c^j} u_j = \mu_j^k + \frac{(\mu^k - \mu^j)^2}{c^k - c^j} u_j. \]
	In other words, this equation implies that the quantity
	\[ P_j:= \mu_j^i+ \frac{(\mu^i- \mu^j)^2}{c^i - c^j} u_j \]
	does not depend on the choice of $i$.

	\item  \eqref{mfELa} yields
	\[ \der{P_i}{R^j} = \der{P_j}{R^i}. \]
	It follows that there exists some function $P$ such that $P_i = \der{}{R^i} P$ for all $i$.
	
	\item \eqref{mfELc} is trivially satisfied: all three variational derivatives are identically zero.
	
	\item \eqref{mfELa} with respect to $\mu^i$ instead of $u$ yields
	\[ -u_{ij} + 2 \frac{\mu^j - \mu^i}{c^j - c^i} u_i u_j = 0 .\]
	
	\item \eqref{mfELb} with respect to $\mu^i$ instead of $u$ yields the trivial equation $-u_i = -u_i$.
	
	\item \eqref{mfELc} with respect to $\mu^i$ instead of $u$ is trivially satisfied: all three variational derivatives are identically zero.
\end{itemize}
Hence, the system of multiform Euler-Lagrange equations is equivalent to
\begin{align*}
	& \mu_j^i = \frac{(\mu^j - \mu^i)^2}{c^j - c^i} u_i + \der{P}{R^j}, \\
	&  u_{ij} = 2 \frac{\mu^j - \mu^i}{c^j - c^i} u_i u_j,
\end{align*}
which, except for the term $\der{P}{R^J}$, matches equation \eqref{h7}. Note that we can remove this term by a gauge transformation $\tilde \mu^i = \mu^i - P$.
For the Lagrangian multiform (\ref{LMFnew}),  each coefficients of $\d \cL = \sum_{ijk} M_{ijk} \,\d R^i \wedge \d R^j \wedge \d R^k$ decomposes into a sum of three factorised terms:
\begin{align*}
	M_{ijk} &= \left( u_{ij} - 2 \frac{\mu^j-\mu^i}{c^j-c^i} u_i u_j \right) \left( \mu^i_k - \frac{(\mu^k - \mu^i)^2}{c^k - c^i}u_k - \mu^j_k + \frac{(\mu^k - \mu^j)^2}{c^k - c^j}u_k \right) \\
	&\qquad + \left( u_{jk} - 2 \frac{\mu^k-\mu^j}{c^k-c^j} u_j u_k \right) \left( \mu^j_i - \frac{(\mu^i - \mu^j)^2}{c^i - c^j}u_i - \mu^k_i + \frac{(\mu^i - \mu^k)^2}{c^i - c^k}u_i \right) \\
	&\qquad + \left( u_{ik} - 2 \frac{\mu^i-\mu^k}{c^i-c^k} u_k u_i \right) \left( \mu^k_j - \frac{(\mu^j - \mu^k)^2}{c^j - c^k}u_j - \mu^i_j + \frac{(\mu^j - \mu^i)^2}{c^j - c^i}u_j \right).
\end{align*}

\subsection{First heavenly equation}

Pleba\'nski's first heavenly equation,
\begin{equation}
\Omega_{xy}\Omega_{zt}-\Omega_{xt}\Omega_{zy}=1,
\label{P1}
\end{equation}
governs K\"ahler potentials of  4-dimensional self-dual Ricci-flat metrics
\begin{equation*}
\d s^2 = 2\Omega_{xy} \,\d x \, \d y + 2\Omega_{zt} \, \d z \,\d t + 2\Omega_{xt} \, \d x \, \d t + 2 \Omega_{zy} \, \d z \, \,\d y,
\label{metric0}
\end{equation*}
see \cite{plebanski1975solutions}. Hydrodynamic reductions of equation (\ref{P1}) were discussed in \cite{ferapontov2003hydrodynamic}.
Using the notation $\Omega_{xy}=a, \ \Omega_{zt}=b, \ \Omega_{xt}=p, \ \Omega_{zy}=q$,
equation (\ref{P1}) takes quasilinear form (\ref{1}),
\begin{equation}
a_t=p_y, \qquad a_z=q_x, \qquad b_x=p_z, \qquad b_y=q_t, \qquad ab-pq=1.
\label{P3}
\end{equation}
Hydrodynamic reductions are sought in the form
$a=a(R^1, ..., R^n)$, $b=b(R^1, ..., R^n)$, $ p=p(R^1, ..., R^n)$, $q=q(R^1, ..., R^n)$, 
where the Riemann invariants  $R^i$ solve a triple of commuting hydrodynamic type systems  (\ref{R}).
The substitution  into (\ref{P3}) implies
\begin{equation}
q_i=\eta^ia_i, \qquad p_i=\frac{\lambda^i}{\mu^i}a_i, \qquad 
b_i=\frac{\eta^i \lambda^i}{\mu^i}a_i.
\label{partial}
\end{equation}
Differentiation of $ab-pq=1$ implies $ab_i+ba_i=pq_i+qp_i$ which, by virtue of (\ref{partial}), gives the dispersion relation
\begin{equation}
\lambda^i=\mu^i\frac{p\eta^i-b}{a\eta^i-q}.
\label{v}
\end{equation}
Substituting (\ref{v}) back into (\ref{partial}), one obtains
\begin{equation}
q_i=\eta^ia_i, \qquad 
p_i=\frac{p\eta^i-b}{a\eta^i-q}a_i, \qquad 
b_i=\eta^i \frac{p\eta^i-b}{a\eta^i-q}a_i.
\label{pb}
\end{equation}
Calculation of the compatibility conditions for equations (\ref{pb}) results in
\begin{equation}
\begin{split}
& a_{ij}=\frac{\eta^i_j}{\eta^j-\eta^i}a_i+\frac{\eta^j_i}{\eta^i-\eta^j}a_j, \\
& \eta^i_j(a\eta^j-q)a_i+\eta^j_i(a\eta^i-q)a_j=
(\eta^i-\eta^j)^2a_ia_j.
\end{split}
\label{a}
\end{equation}
The commutativity conditions (\ref{comm}),
with $\lambda^i$ given by (\ref{v}),  imply 
\begin{equation}
\eta^i_j=\frac{\mu^i(\eta^j-\eta^i)^2}{\mu^i(a\eta^j-q)-\mu^j(a\eta^i-q)}a_j.
\label{c}
\end{equation}
Ultimately, combining (\ref{a}), (\ref{comm}) and (\ref{c}), we arrive at the following Gibbons-Tsarev type equations for $a, q, \mu^i, \eta^i$:
\begin{equation}
\begin{split}
& q_i=\eta^ia_i, \\
& a_{ij}=\frac{(\mu^i+\mu^j)(\eta^j-\eta^i)}{\mu^i(a\eta^j-q)-\mu^j(a\eta^i-q)} a_ia_j,  \\
& \eta^i_j=\frac{\mu^i(\eta^j-\eta^i)^2}{\mu^i(a\eta^j-q)-\mu^j(a\eta^i-q)} a_j, \\ &\mu^i_j=\frac{\mu^i(\eta^j-\eta^i)(\mu^j-\mu^i)}{\mu^i(a\eta^j-q)-\mu^j(a\eta^i-q)} a_j,
\label{nP2}
\end{split}
\end{equation}
where $i\ne j$. Note that $p, b, \lambda^i$ do not explicitly enter these equations. One can show by  direct calculation that  system (\ref{nP2}) is in involution and its general solution depends, modulo reparametrisations $R^i\to f^i(R^i)$, on $2n$ arbitrary functions of a single variable, thus confirming the integrability of the first heavenly equation.

Let us proceed with the analysis of the Gibbons-Tsarev system (\ref{nP2}).  Introducing potential $u$ by the formula
$u_i=a_i/\mu^i$ (compatibility is guaranteed by (\ref{nP2})), as well as the functions $c^i=(a\eta^i-q)/\mu^i$, one can rewrite (\ref{nP2}) in the form
\begin{equation}
\begin{split}
& q_i=\mu^i\eta^iu_i, \\
& u_{ij}=2\frac{\eta^j-\eta^i}{c^j-c^i}\, u_iu_j,  \\
& \eta^i_j=\frac{(\eta^j-\eta^i)^2}{c^j-c^i}\, u_j, \\
&\mu^i_j=\frac{(\eta^j-\eta^i)(\mu^j-\mu^i)}{c^j-c^i}\, u_j .
\label{nP22}
\end{split}
\end{equation}
It remains to  verify the identity $c^i_j=0$, i.e.\@ that the $c_i$ are functions of $R^i$ only, which makes the above equations for $u$ and $\eta^i$ identical to equations (\ref{h7}). Thus, Gibbons-Tsarev systems governing hydrodynamic reductions of the
first and second heavenly equations come from one and the same Lagrangian multiform (which is not surprising as both  equations describe one and the same class of self-dual Ricci-flat metrics and are B\"acklund-related \cite{plebanski1975solutions}).

\subsection{6D version of the second heavenly equation}

A six-dimensional generalisation of the second heavenly equation,
\begin{equation}
\theta_{t\tilde t}+\theta_{z\tilde z}+\theta_{tx}\theta_{zy}-\theta_{ty}\theta_{zx}=0,
\label{P11}
\end{equation}
was proposed in \cite{plebanski1996lagrangian}.
Its hydrodynamic reductions were studied in  \cite{ferapontov2004integrability}. Introducing  the notation $\theta_{tx}=a, $ $ \theta_{zy}=b, $ 
$ \theta_{ty}=p, $ $ \theta_{zx}=q, $ $\theta_{z\tilde z}=r,$ $ \theta_{t\tilde t}=pq-ab-r$, one can rewrite 
(\ref{P11}) in quasilinear form (\ref{1}),
\begin{equation}
\begin{split}
&a_y=p_x, \qquad a_z=q_t, \qquad b_t=p_z, \qquad b_x=q_y, \qquad b_{\tilde z}=r_y, \qquad q_{\tilde z}=r_x,\\
& p_{\tilde z}=(pq-ab-r)_y.
\end{split}
\label{P22}
\end{equation}
Hydrodynamic reductions are sought in the form $a=a(R^1, ..., R^n),$ $b=b(R^1, ..., R^n),$ $p=p(R^1, ..., R^n),$ $q=q(R^1, ..., R^n), $  $r=r(R^1, ..., R^n)$ where the Riemann invariants $R^1, ..., R^n$  solve the commuting equations
\begin{align*}
&R^i_{x}=\lambda^i(R)\ R^i_{z}, \qquad  R^i_{ y}=\mu^i(R)\ R^i_{z}, \qquad R^i_{\tilde z}=\eta^i(R)\ R^i_{z}, \qquad \\
& R^i_{t}=\beta^i(R)\ R^i_{z}, \qquad R^i_{\tilde t}=\gamma^i(R)\ R^i_{z}.
\end{align*}
The substitution into (\ref{P22}) implies
\begin{equation}
\partial_i p=\beta^i\partial_i b, \qquad \partial_i r=\frac{\eta^i}{\mu^i}\partial_i b, \qquad \partial_iq=\frac{\lambda^i}{\mu^i}\partial_i b, \qquad \partial_i a=\frac{\lambda^i \beta^i}{\mu^i}\partial_i b,
\label{P4}
\end{equation}
along with the dispersion relation
\begin{equation}
\eta^i=\beta^i\mu^i q+\lambda^i p-\beta^i\lambda^i b-\mu^i a-\beta^i\gamma^i.
\label{P5}
\end{equation}
Substituting  $\eta^i$ into the commutativity conditions
$$
\frac{\partial_j\lambda
^i}{\lambda^j-\lambda^i}=\frac{\partial_j\mu^i}{\mu^j-\mu^i}=\frac{\partial_j\eta^i}{\eta^j-\eta^i}=
\frac{\partial_j\beta^i}{\beta^j-\beta^i}=\frac{\partial_j\gamma^i}{\gamma^j-\gamma^i},
$$
and taking into account that the compatibility conditions for the relations $\partial_i p=\beta^i\partial_i b$ imply
$$
\partial_i\partial_jb=\frac{\partial_j\beta^i}{\beta^j-\beta^i}\partial_ib+\frac{\partial_i\beta^j}{\beta^i-\beta^j}\partial_jb,
$$
one arrives at the following system:
\begin{equation}
\begin{split}
& \frac{\partial_j\beta^i}{\beta^j-\beta^i}=\frac{\partial_j\lambda^i}{\lambda^j-\lambda^i}=
\frac{\partial_j\mu^i}{\mu^j-\mu^i}=\frac{\partial_j\gamma^i}{\gamma^j-\gamma^i}=
\frac{\lambda^i-\lambda^j\mu^i/\mu^j}{q(\mu^j-\mu^i)+b(\lambda^i-\lambda^j)+\gamma^i-\gamma^j}
\ \partial_jb, \\
& \partial_i\partial_jb=
\frac{\lambda^i(1+\mu^j/\mu^i)-\lambda^j(1+\mu^i/\mu^j)}{q(\mu^j-\mu^i)+b(\lambda^i-\lambda^j)+\gamma^i-\gamma^j}\ \partial_ib\partial_jb.
\end{split}
\label{P6}
\end{equation}
Solving   equations (\ref{P6}) for $\beta^i$, $\lambda^i$, $\mu^i$,  $\gamma^i$ and $b$, determining $\eta^i$ from the dispersion relation (\ref{P5}) and calculating $p, r, q, a$ from the equations (\ref{P4}) 
(which are automatically compatible by virtue of (\ref{P6})), one obtains a general $n$-component  reduction of the  equation (\ref{P11}).
The commutativity conditions  will  be satisfied identically.
System (\ref{P6}) is in involution and its general solution depends, up to
reparametrisations $R^i\to \varphi^i(R^i)$,  on $4n$ arbitrary functions of one variable.

Let us proceed with the analysis of Gibbons-Tsarev system (\ref{P6}).  Introducing the functions $c^i=\gamma^i+b\lambda^i-q \mu^i$, one can verify the identity $c^i_j=0$, thus, $c_i$ are functions of $R^i$ only, so that equations (\ref{P6}) take the form
\begin{equation}
\begin{split}
& \frac{\lambda^i_j}{\lambda^j-\lambda^i}=
\frac{\mu^i_j}{\mu^j-\mu^i}=
\frac{\lambda^i-\lambda^j\mu^i/\mu^j}{c^i-c^j}
\, b_j, \\
& b_{ij}=
\frac{\lambda^i(1+\mu^j/\mu^i)-\lambda^j(1+\mu^i/\mu^j)}{c^i-c^j}\, b_ib_j.
\end{split}
\label{P61}
\end{equation}
Note that by introducing potential $u$ by the formula
$u_i=\mu^ib_i$ (compatibility is guaranteed by (\ref{P61})), as well as the functions $\phi^i= \lambda^i / \mu^i$, one can reduce (\ref{P61}) to the familiar form (\ref{h7}),
$$
\phi^i_j=\frac{(\phi^j-\phi^i)^2}{c^j-c^i}\, u_j, \qquad u_{ij}=2\frac{\phi^j-\phi^i}{c^j-c^i}\, u_iu_j.  
$$

\section{Concluding remarks}

\begin{itemize}

\item We considered (integrable) PDEs $F=0$ with a (higher) conservation law $C$ with non-constant characteristic (cosymmetry) $S$, so that $\d C = FS$. Our examples suggest that the constrained system, $F=S=0$, is often involutive. It would be interesting to clarify what additional conditions are required for this to be the case.

\item Equations (\ref{V}) of the Veronese web hierarchy  describe geometric objects known as Veronese webs, see \cite{krynski2016paraconformal, ferapontov2021dispersionless}. On the contrary, equations (\ref{E}) describe potential (Egorov) metrics of diagonal curvature, typically arising in the theory of integrable systems of hydrodynamic type  \cite{tsarev1991geometry}. It seems remarkable that, put together, the equations governing these structures are compatible. In this connection, it would be interesting to understand the geometry of Veronese webs constrained by equations (\ref{E}), as well as the properties of Egorov metrics constrained by equations (\ref{V}).

\item Our examples suggest that Gibbons-Tsarev systems, governing hydrodynamic reductions of linearly degenerate dispersionless integrable PDEs, possess a Lagrangian multiform representation.

\end{itemize}

\section*{Acknowledgements}

We thank S. Agafonov, V. Caudrelier, I. Krasilshchik,  O. Morozov, V. Novikov, M. Pavlov, and Y. Suris for clarifying discussions. We also thank the reviewers for useful comments.

MV is supported by the Engineering and Physical Sciences Research Council [Project reference EP/Y006712/1].

\section*{Declarations}

\paragraph{Conflict of interest} There is no conflict of interest related to this article.

\paragraph{Data availability} A pre-print article has appeared as arXiv:2503.22615. There is no additional data related to this article.

\appendix
\section*{Appendix}
\setcounter{section}{1}
\addcontentsline{toc}{section}{Appendix}

\subsection{Lagrangian multiform (\ref{LMF}) and Euler-Lagrange equations}
\label{app-vwe}

In this section we demonstrate by direct calculation that the multiform Euler-Lagrange equations (\ref{vweEL}) for the Lagrangian multiform (\ref{LMF}) are equivalent to equations (\ref{V})--(\ref{E}). The latter can be written as $A = 0$ and $B = 0$ with
\[ A= a^i u_i u_{jk} + a^j u_j u_{ik} + a^k u_k u_{ij} , \]
where $a^i=c^j-c^k$, etc,\@ and
\[ B= 2 u_{ijk} - \frac{{u_{ij}} {u_{ik}}}{{u_{i}}} - \frac{{u_{ij}} {u_{jk}}}{{u_{j}}} - \frac{{u_{ik}} {u_{jk}}}{{u_{k}}} . \]
Note that $A = 0$ is the corner equation \eqref{vweEL3}.

\begin{lemma}
	\label{lemma-A1}
	If $A = 0$ and its differential consequences hold, then $B = 0$ is equivalent to the edge equation \eqref{vweEL2}.
\end{lemma}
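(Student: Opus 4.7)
The plan is to use the differential consequence $\partial_i A_{ijk}=0$ of the planar relation to eliminate the pure mixed third derivative $u_{ijk}$ from $B_{ijk}$, thereby converting the equation $B_{ijk}=0$ into a relation involving only the third derivatives $u_{iij}$ and $u_{iik}$ that appear in the edge equation \eqref{vweEL2}.

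Concretely, I would differentiate $A_{ijk}$ in $x^i$. The summand $(c^j-c^k)\,u_i u_{jk}$ yields $(c^j-c^k)\,u_i u_{ijk}$, while the other two summands contribute terms proportional to $u_j u_{iik}$ and $u_k u_{iij}$ plus quadratic products of second derivatives. Setting $\partial_i A_{ijk}=0$ and solving for $u_{ijk}$ gives an expression of the schematic form
\[
u_{ijk} \;=\; \frac{u_{ij}u_{ik} - u_{ii}u_{jk}}{u_i} \;-\; \frac{(c^k-c^i)\,u_j u_{iik} + (c^i-c^j)\,u_k u_{iij}}{(c^j-c^k)\,u_i}.
\]
Substituting this into $B_{ijk}$ and clearing denominators by multiplication by $(c^j-c^k)\,u_i u_j u_k$, the third-derivative part of the resulting expression is manifestly
\[
2(c^j-c^i)\,u_k\,u_{iij} \;-\; 2(c^k-c^i)\,u_j\,u_{iik},
\]
which matches the $u_{iij}$ and $u_{iik}$ content of \eqref{vweEL2} (multiplied by the common factor $u_i u_j u_k$).

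It then remains to reconcile the second-derivative products. Terms of the form $u_{ij}^2$, $u_{ii}u_{ij}$, $u_{ik}^2$, $u_{ii}u_{ik}$ already reproduce, after bookkeeping, the non-derivative part of \eqref{vweEL2}, while the remaining ``cross'' terms involving $u_{jk}$, $u_{ij}u_{jk}$ or $u_{ik}u_{jk}$ must be shown to assemble into a multiple of $A_{ijk}$ itself, and therefore vanish modulo the planar equation. The main obstacle is this final reorganisation: one must verify that the coefficients in front of the cross terms combine, with exactly the right $c^i, c^j, c^k$-dependence, into $A_{ijk}$ times an appropriate cofactor. This reduces to a polynomial identity in the first and second derivatives of $u$, linear in the $c$'s; antisymmetry under $j \leftrightarrow k$ (which both the reduced $B_{ijk}$ and the edge equation respect up to an overall sign, whereas $A_{ijk}$ is itself antisymmetric under this swap) should halve the bookkeeping, leaving essentially a single coefficient identity to verify. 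Since the entire manipulation consists of an invertible elimination of $u_{ijk}$ followed by an algebraic reduction modulo $A_{ijk}$, the resulting equality yields the claimed equivalence of $B_{ijk}=0$ with \eqref{vweEL2} in both directions.
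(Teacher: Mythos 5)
Your approach is correct and is essentially the paper's own proof read constructively: eliminating $u_{ijk}$ from $B$ via $\partial_i A = 0$ and then reducing the remaining second-derivative products modulo $A$ amounts exactly to the identity that the paper records, namely that the left-hand side of \eqref{vweEL2} equals $\frac{1}{u_i u_j u_k}\bigl( 2 A_i - a^i u_i B - \frac{u_{ij}}{u_j} A - \frac{u_{ik}}{u_k} A - 2 \frac{u_{ii}}{u_i} A \bigr)$. The cross-term reorganisation you defer is precisely the multiple $\bigl(\frac{u_{ij}}{u_j} + \frac{u_{ik}}{u_k} + 2\frac{u_{ii}}{u_i}\bigr) A$ appearing there, so the argument closes as you anticipate (the two-way equivalence of course needs $c^j \neq c^k$, which is implicit throughout).
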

\begin{proof}
	Taking into account that $a^i+a^j+a^k = 0$, the left-hand side of equation \eqref{vweEL2} can be written as
	\[ \frac{1}{u_i u_j u_k} \left( 2 A_i - a^i u_i B - \frac{u_{ij}}{u_j} A - \frac{u_{ik}}{u_k} A - 2 \frac{u_{ii}}{u_i} A \right) . \qedhere \]
\end{proof}

\begin{lemma}
	\label{lemma-A2}
	$A = 0$ and $B = 0$, together with their differential consequences, imply the planar equations \eqref{vweEL1},
	\[
	a^{k} {\left(\frac{{u_{ij}} {u_{ijj}}}{{u_{i}} u_{j}^2} - \frac{{u_{iijj}}}{{u_{i}} {u_{j}}} + \frac{{u_{iij}} {u_{ij}}}{u_{i}^2 {u_{j}}} - \frac{{u_{ii}} u_{ij}^2}{u_{i}^3 {u_{j}}} + \frac{{u_{ii}} {u_{ijj}}}{u_{i}^2 {u_{j}}} - \frac{u_{ij}^2 {u_{jj}}}{{u_{i}} u_{j}^3} + \frac{{u_{iij}} {u_{jj}}}{{u_{i}} u_{j}^2} - \frac{{u_{ii}} {u_{ij}} {u_{jj}}}{u_{i}^2 u_{j}^2}\right)} = 0,
	\]
	and equations obtained from this by cyclic permutation of $(i,j,k)$.
\end{lemma}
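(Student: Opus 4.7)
The strategy parallels that of Lemma~\ref{lemma-A1}: I will produce an explicit identity expressing $a^k$ times the left-hand side of \eqref{vweEL1} as a linear combination of $A$, $B$, and selected partial derivatives of these quantities (up to order two), with coefficients that are rational functions of the low-order jet variables $u_i, u_j, u_k, u_{ij}, u_{ik}, u_{jk}, u_{ii}, u_{jj}$. Once such an identity is written down, the planar equation follows immediately on the constraint surface cut out by $A=0$, $B=0$ and their differential consequences. The cyclic-permutation statements are then obtained by relabelling.

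The leading term of the combination is forced. The planar expression contains the fourth-order derivative $u_{iijj}$ with coefficient $-a^k/(u_i u_j)$, and among the available differential consequences only $A_{ij}$ contributes $u_{iijj}$: applying $\partial_i \partial_j$ to the term $a^k u_k u_{ij}$ in $A$ produces $a^k u_k u_{iijj}$. Hence the combination must begin with the piece $-\frac{1}{u_i u_j u_k}\,A_{ij}$, in direct analogy with the coefficient $\frac{1}{u_i u_j u_k}\cdot 2 A_i$ that opened Lemma~\ref{lemma-A1}.

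I then proceed by elimination in decreasing differential order. The residue after subtracting $-A_{ij}/(u_i u_j u_k)$ contains mixed $k$-index third-order terms such as $u_{iijk}$, $u_{ijjk}$, $u_{ijk}$, which do not appear in the planar expression; these must cancel against contributions from $B_i$, $B_j$, and $B$, whose coefficients are therefore determined by matching. The remaining pure $(i,j)$ third-order terms $u_{iij}$ and $u_{ijj}$ are handled by appropriate multiples of $A_i$, $A_j$ and, if necessary, $A_k$, while the resulting second-order algebraic tail in $u_{ii}, u_{jj}, u_{ij}, u_{ik}, u_{jk}$ is finally absorbed as a rational multiple of $A$ itself, using $a^i+a^j+a^k=0$ to collapse the index-$k$ data.

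The main obstacle is bookkeeping rather than conceptual: the planar expression has eight monomials, $A_{ij}$ injects roughly a dozen spurious third- and fourth-order terms, and keeping track of rational coefficients during elimination is tedious enough that verification by computer algebra is the most practical route. No new idea is required, however, because Theorem~\ref{thm-double-zero} already guarantees, via its variational argument, that $A=0$, $B=0$ imply every multiform Euler-Lagrange equation, so an identity of the claimed form must exist. Lemma~\ref{lemma-A2} merely makes this implication algebraically explicit in the case of the planar equation, completing (together with Lemma~\ref{lemma-A1} and the direct identification of the corner equation with $A=0$) a self-contained computational verification of the equivalence announced after \eqref{vweEL}.
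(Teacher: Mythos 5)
Your strategy coincides with the paper's own proof: the paper likewise starts from $A_{ij}=0$ (the forced leading term $-A_{ij}/(u_iu_ju_k)$ that you identify), eliminates the $k$-derivatives $u_{iijk}$, $u_{ijjk}$, $u_{iik}$, $u_{jjk}$ using $B$, $B_i$, $B_j$ and the edge equation \eqref{vweEL2} (itself a combination of $A$, $A_i$, $B$ by Lemma~\ref{lemma-A1}), and finally collapses the remaining second- and third-order blocks with a rational multiple of $A$ before dividing by $-u_iu_ju_k$. The only difference is that the paper carries out this elimination explicitly by hand, whereas you defer the bookkeeping to computer algebra and invoke Theorem~\ref{thm-double-zero} for existence of the identity; the plan is nonetheless sound and matches the published argument.
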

\begin{proof}
	Differentiate $A= 0$ with respect to $x^i$ and $x^j$ and simplify using $a^i+a^j+a^k=0$:
	\begin{align*}
		{\left({u_{i}} {u_{ijjk}} + {u_{ii}} {u_{jjk}} + {u_{iij}} {u_{jk}}\right)} a^{i}
		&+ {\left({u_{ijj}} {u_{ik}} + {u_{iijk}} {u_{j}} + {u_{iik}} {u_{jj}}\right)} a^j \\
		&+ {\left({u_{ijj}} {u_{ik}} + {u_{iij}} {u_{jk}} + {u_{iijj}} {u_{k}}\right)} a^{k} = 0 .
	\end{align*}
	We want to eliminate all derivatives with respect to $x^k$ from this equation.	
	We can use $B=0$ and its differential consequences $B_i = 0$ and $B_j = 0$ to eliminate $u_{iijk}$ and $u_{ijjk}$, and \eqref{vweEL2} to eliminate $u_{iik}$ and $u_{jjk}$. Then we simplify using $a^i+a^j+a^k = 0$ and group terms strategically to obtain
	\begin{align*}
		&\frac{a^{i}}{2} {\left(\frac{{u_{iij}}}{u_i} + \frac{{u_{ijj}} }{{u_{j}}} + \frac{u_{ij}^2 }{u_i {u_{j}}}  + \frac{2 {u_{ii}} {u_{jj}} }{u_i {u_{j}}} + \frac{{u_{ij}} {u_{ik}} }{u_i {u_{k}}} + \frac{{u_{ik}} {u_{jj}} }{{u_{j}} {u_{k}}} + \frac{{u_{ii}} {u_{jk}}}{u_i {u_{k}}} + \frac{{u_{ij}} {u_{jk}}}{{u_{j}} {u_{k}}}\right)} u_i u_{jk}  \\
		& + \frac{a^j}{2} {\left(\frac{{u_{iij}}}{{u_{i}}} + \frac{u_{ijj}}{u_j} + \frac{u_{ij}^2 }{{u_{i}} u_j} + \frac{2 {u_{ii}} {u_{jj}}}{{u_{i}} u_j} + \frac{{u_{ij}} {u_{ik}}}{{u_{i}} {u_{k}}} + \frac{{u_{ik}} {u_{jj}}}{ u_j {u_{k}}} + \frac{{u_{ii}} {u_{jk}}}{{u_{i}} {u_{k}}} + \frac{{u_{ij}} {u_{jk}}}{ u_j {u_{k}}}\right)} u_j u_{ik}  \\
		&+ \frac{a^{k}}{2} {\left( - \frac{{u_{iij}}}{{u_{i}}} - \frac{{u_{ijj}}}{{u_{j}}} + \frac{u_{ij}^2}{{u_{i}} {u_{j}}} + \frac{2 {u_{ii}} {u_{jj}} }{{u_{i}} {u_{j}}} + \frac{{u_{ij}} {u_{ik}}}{{u_{i}} u_k} + \frac{{u_{ik}} {u_{jj}}}{{u_{j}} u_k} + \frac{{u_{ii}} {u_{jk}}}{{u_{i}} u_k} + \frac{{u_{ij}} {u_{jk}}}{{u_{j}} u_k}\right)} u_k u_{ij}  \\
		&+ \frac{a^{k}}{2} {\left( 2 {u_{iijj}} - \frac{2 {u_{ii}} {u_{ijj}} }{{u_{i}}} - \frac{2 {u_{iij}} {u_{jj}} }{{u_{j}}} + \frac{2 {u_{ii}} u_{ij}^2 }{u_{i}^2} + \frac{2 u_{ij}^2 {u_{jj}} }{u_{j}^2} + \frac{2 {u_{ii}} {u_{ij}} {u_{jj}}}{{u_{i}} {u_{j}}} \right)} u_k = 0 .
	\end{align*}
	Applying $A = 0$ now allows us to eliminate $a^i$ and $a^j$:
	\begin{align*}
		& a^{k} \left( - \frac{{u_{iij}} u_{ij}}{{u_{i}}} - \frac{{u_{ijj}} u_{ij}}{{u_{j}}} \mathrel{+}  {u_{iijj}} - \frac{{u_{ii}} {u_{ijj}} }{{u_{i}}} - \frac{ {u_{iij}} {u_{jj}} }{{u_{j}}} + \frac{{u_{ii}} u_{ij}^2 }{u_{i}^2} + \frac{u_{ij}^2 {u_{jj}} }{u_{j}^2} + \frac{{u_{ii}} {u_{ij}} {u_{jj}}}{{u_{i}} {u_{j}}} \right) u_k = 0 .
	\end{align*}
	Finally, we divide by $-u_i u_j u_k$ to find the planar Euler-Lagrange equation \eqref{vweEL1}.
\end{proof}

Together, these Lemmas show that the system $A=0$, $B = 0$ is equivalent to  the system \eqref{vweEL}.

Analogous calculations show that relations (\ref{Vx}) and (\ref{E}) are equivalent to the multiform Euler-Lagrange equations in the translationally non-invariant case (\ref{LMF1}). The multiform Euler-Lagrange equations in this case are 
\begin{subequations} 
	\label{vweEL-nonautonomous}
	\begin{align}
		&(x^i-x^j) \left( \left(\frac{u_{ij}^2}{u_i^2u_j}\right)_i + \left(\frac{u_{ij}^2}{u_iu_j^2}\right)_j + \left(\frac{2u_{ij}}{u_iu_j}\right)_{ij} \right) \notag\\
		&\hspace{3cm} + \frac{u_{ij}^2}{u_i^2u_j} - \frac{u_{ij}^2}{u_iu_j^2} + \left(\frac{2u_{ij}}{u_iu_j}\right)_{j} - \left(\frac{2u_{ij}}{u_iu_j}\right)_{i} = 0, \\
		&(x^j - x^i) \left( \frac{u_{ij}^2}{u_i u_j^2} + 2 \frac{u_{ii}u_{ij}}{u_i^2 u_j} - 2 \frac{u_{iij}}{u_i u_j}\right) \notag\\
		&\hspace{3cm} - (x^k - x^i) \left( \frac{u_{ik}^2}{u_i u_k^2} + 2 \frac{u_{ii}u_{ik}}{u_i^2 u_k} - 2 \frac{u_{iik}}{u_i u_k}\right) + \frac{2u_{ij}}{u_i u_j} - \frac{2u_{ik}}{u_i u_k} = 0,  
		\\
		& (x^i-x^j)\frac{u_{ij}}{u_iu_j}+(x^j-x^k)\frac{u_{jk}}{u_ju_k}+(x^k-x^i)\frac{u_{ik}}{u_iu_k} = 0 . 
	\end{align}
\end{subequations}
The proof of Lemma \ref{lemma-A1} does not change and the proof of Lemma \ref{lemma-A2} requires one additional step of adding $\frac{B}{2 u_j u_k} - \frac{B}{2 u_i u_j}$.

\subsection{Lagrangian multiform (\ref{LMF}) and  sigma model equations}
\label{app-sigma}

Here we provide some more detail on equation (\ref{vweEL1}), setting $(i, j)=(1, 2)$:
$$
\left(\frac{u_{12}^2}{u_1^2u_2}\right)_1+\left(\frac{u_{12}^2}{u_1u_2^2}\right)_2+\left(\frac{2u_{12}}{u_1u_2}\right)_{12}=0.
$$
This fourth-order PDE can be rewritten as a second-order system,
$$
u_{12}-p\,u_1u_2=0, \qquad p_{12}+p\,p_1u_2+p\,p_2u_1+p^3u_1u_2=0,
$$
with the action functional
$$
\int \left( p^2u_1u_2+p_1u_2+p_2u_1 \right) \d x^1\wedge \d x^2.
$$
This Lagrangian governs harmonic maps from the pseudo-Euclidean ($x^1, x^2$)-plane with  the flat metric $\d x^1 \, \d x^2$, to the pseudo-Riemannian manifold with coordinates $u, p$ and the metric $p^2 \, \d u^2 + 2 \, \d u \, \d p$. Note that the latter metric has constant curvature $1$; the nonzero Christoffel symbols are $\Gamma^1_{11}=-p, \ \Gamma^2_{12}=\Gamma^2_{21}=p, \ \Gamma^2_{11}=p^3$, where we label $u, p$ as the first and second coordinates, respectively. 

In general,  sigma-models describing harmonic maps from the pseudo-Euclidean ($x^1, x^2$)-plane with  the metric $\d x^1 \, \d x^2$ to a pseudo-Riemannian manifold with coordinates $u^i$ and the metric $g_{ij} \, \d u^i \, \d u^j$, are governed by the Lagrangian $\int g_{ij}u^i_1u^j_2 \, \d x^1\wedge \d x^2$, with the Euler-Lagrange equations
$$
u^i_{12}+\Gamma^i_{jk}u^j_1u^k_2=0,
$$
where $\Gamma^i_{jk}$ are the Christoffel symbols of $g_{ij}$.

\interlinepenalty=10000

\bibliographystyle{abbrvnat_mv}
\bibliography{Fer-Ver}

\end{document}